\crefname{hypothesis}{Hypothesis}{Hypotheses}
\title{Fast numerical generation of Lie closure\thanks{
\funding{This work was supported by Japan
Society for the Promotion of Science (JSPS),
through Grants-in-Aid for Scientific Research (KAKENHI)
Grant No. 20K22347., and by the Center of Innovations
for Sustainable Quantum AI (JST Grant Number JPMJPF2221).}}}
\author{Yutaro Iiyama\thanks{ICEPP, The University of Tokyo, Tokyo, Japan 
  (\email{iiyama@icepp.s.u-tokyo.ac.jp}).}}
\DeclareMathOperator{\diag}{diag}
\DeclareMathOperator{\poly}{poly}
\DeclareMathOperator{\tr}{tr}
\DeclareMathOperator{\proj}{proj}
\DeclareMathOperator{\vspan}{span}
\newcommand*{\addFileDependency}[1]{
  \typeout{(#1)}
  \@addtofilelist{#1}
  \IfFileExists{#1}{}{\typeout{No file #1.}}
}
\newcommand{\SUd}{\ensuremath{SU(d)}\xspace}
\newcommand{\sud}{\ensuremath{\mathfrak{su}(d)}\xspace}
\newcommand{\algebra}{\ensuremath{\mathfrak{g}}\xspace}
\newcommand{\generators}{\ensuremath{\mathcal{G}}\xspace}
\newcommand{\closure}{\ensuremath{\mathfrak{c}}\xspace}
\newcommand{\parameters}{\ensuremath{\boldsymbol{\theta}}\xspace}
\newcommand{\evolution}{\ensuremath{U(\parameters)}\xspace}
\newcommand{\comm}[2]{\ensuremath{\left[{#1}, {#2}\right]}}
\newcommand{\innerprod}[2]{\ensuremath{\left\langle{#1}, {#2}\right\rangle}}
\algrenewcommand\algorithmicrequire{\textbf{Input:}}
\algrenewcommand\algorithmicensure{\textbf{Output:}}
\begin{document}

\maketitle

\begin{abstract}
  Finding the Lie-algebraic closure of a handful of matrices has important applications in quantum computing and quantum control. For most realistic cases, the closure cannot be determined analytically, necessitating an explicit numerical construction. The standard construction algorithm makes repeated calls to a subroutine that determines whether a matrix is linearly independent from a potentially large set of matrices. Because the common implementation of this subroutine has a high complexity, the construction of Lie closure is practically limited to trivially small matrix sizes. We present efficient alternative methods of linear independence check that simultaneously reduce the computational complexity and memory footprint. An implementation of one of the methods is validated against known results. Our new algorithms enable numerical studies of Lie closure in larger system sizes than was previously possible.
\end{abstract}

\begin{keywords}
  Lie closure, Dynamical Lie algebra, variational quantum circuit
\end{keywords}

\begin{AMS}
  15A03, 15A30, 65-04
\end{AMS}

\section{Introduction}

In the field of quantum computing and quantum optimal control, identification of the closure of a set of elements of a finite-dimensional Lie algebra has important applications. Particularly in quantum computing, the dimension of the so-called dynamical Lie algebra (DLA) of a parametrized quantum circuit, which is the closure of the generators of parametrized quantum gates, is closely related to its trainability during variational optimization tasks~\cite{Larocca2022-bh, Fontana2024-le, Monbroussou2025-he}. Ability to promptly compute the DLA starting from a numerical representation of generators for different system sizes helps in predicting the utility of various quantum circuit structures, and to further deepen our understanding of the relationship of the algebraic, geometric, and computational properties of quantum circuits. Even apart from such fields of study, numerically studying the structure of Lie algebras can be generally interesting.

Lie closure, i.e., the smallest Lie subalgebra that contains the given elements, can be determined analytically for some special cases~\cite{Oszmaniec2017-cf, Larocca2022-bh, Pozzoli2022-oz, Aguilar2024-rb, Wiersema2024-om, Kokcu2024-vn}. However, for a majority of the nontrivial cases, numerical or symbolic but explicit construction appears to be necessary. The standard and thus far the only known strategy for construction of Lie closures is to iteratively build a basis of the subalgebra by calculating all possible nested Lie brackets and checking the linear independence of each with respect to other elements of the basis~\cite{Schirmer2001-fj, Zeier2010-me, Larocca2022-bh}. The complexity of an algorithm following this strategy for a subalgebra of dimension $N$ is in general $\mathcal{O}(N^p)$ with $p \geq 2$, where the lower bound for $p$ arises from the need to evaluate at least $N(N-1)/2$ Lie brackets. Since in many cases $N = \mathcal{O}(c^n)$ for some $c>1$ and the system size (for example, the number of qubits in quantum computing) $n$, this approach is intractable beyond some moderate $n$. In fact, it may become practically infeasible even at a small $n$, as $p$ can be much greater than 2 if the linear independence check is performed through a poorly scaling method, such as by investigating the rank of the matrix composed of the elements of the basis represented as column vectors.
An efficient procedure to determine the linear independence is therefore highly desired.


In this paper, we introduce two alternative modifications to the closure construction algorithm that employ different methods of linear independence check. Both methods not only have lower complexity than the reference algorithm based on matrix rank calculation, but also have significant advantages in handling inputs in a sparse representation.

This paper is organized as follows. In \cref{sec:background}, we introduce the scientific background of the problem of DLA construction. We then describe the general algorithm and present our improvements in \cref{sec:main}. We validate our implementations of the proposed methods in \cref{sec:validation}, and conclude in \cref{sec:conclusion}.

\section{Background}
\label{sec:background}

\subsection{Dynamical lie algebra}
\label{subsec:dla}

In quantum computation under the quantum gate model, a set of $n$ qubits is prepared in some initial state, and is transformed into a final state via an application of a quantum circuit (a sequence of quantum gates). In this paper, we focus our attention to parametrized quantum circuits (PQC), where the gates are elements of continuous families of operations that include the identity operation. In a class of applications of PQC called variational quantum algorithms (VQA), the parameters of the gates are optimized to realize the target final state. Likewise, in quantum optimal control, the main objective is to effect a transformation of a quantum system from a given initial state to a desired final state.

In all cases, the quantum states are represented by unit-norm vectors in a $d$-dimensional Hilbert space for some finite $d$ ($d=2^n$ for a $n$-qubit system). The transformations are thus represented by elements of \SUd, and its properties are conveniently analyzed using the corresponding Lie algebra \sud. Specifically, possible forms of control that can be applied to the system (parametrized gates in computing; pulse signals in optimal control) are given in terms of a small set $\generators \subset \sud$, and the full state transformation is specified by a set of parameters $\parameters = \{\theta_j\}_{j=0}^{J-1} \in \mathbb{R}^{J}$ as
\begin{equation}
    \evolution = \prod_{j=0}^{J-1} \exp (\theta_j h_{l_j}) \quad (h_l \in \generators).
\end{equation}
The closure \closure of \generators, i.e., the smallest subalgebra of \sud that contains all elements of \generators, therefore dictates which final states can be reached, at least in principle, from a given initial state. In the context of quantum optimal control, this subalgebra is called the dynamical Lie algebra (DLA) of \generators, a term later adopted in quantum computing. A $d$-dimensional system whose DLA is isomorphic to \sud, and therefore has the maximal dimension, is said to be controllable, since arbitrary final states can be obtained via \evolution for a sufficiently large number of parameters $J$.

The significance of the DLA goes beyond determining the controllability of the system. Indeed, it was recently revealed that the variance of the gradients of the cost function of VQA, which indicates how easily the optimal parameter values can be found, scales inversely with the dimension of the DLA~\cite{Larocca2022-bh,Fontana2024-le}. In other words, circuits with large DLA dimensions have vanishing cost function gradients and can become practically untrainable. At the same time, having more circuit parameters than the dimension of the DLA was shown to be a sufficient condition to guarantee convergence of the VQA to the global optimum~\cite{Larocca2023-qy}. It was even shown that the DLA of a quantum circuit can be used to classically simulate the circuit~\cite{Goh2023-ya}. Through these results, the study of the DLA for any given PQC structure has become a major tool in understanding its performance in a VQA setting.

\subsection{Nomenclature and notation}
\label{subsec:notation}

In this paper, we restrict our discussion to finite-dimensional Lie algebras over $\mathbb{C}$, and thus equate their elements with $d \times d$ complex square matrices. The presented algorithms are applicable to any such matrix Lie algebras. However, with applications to quantum physics in mind, a particular attention will be given to \sud. We will also be adopting the quantum physics terminology and will call the elements of Lie algebras generically as operators.

Because we assume complex matrix Lie algebras, the Lie bracket $\comm{\cdot}{\cdot}: \algebra \times \algebra \to \algebra$ is simply the matrix commutator
\begin{equation}
    \comm{a}{b} = ab - ba \quad (a, b \in \algebra).
\end{equation}
\algebra as a linear space is also equipped with an inner product $\innerprod{\cdot}{\cdot}: \algebra \times \algebra \to \mathbb{C}$ defined as
\begin{equation}
    \innerprod{a}{b} = \frac{1}{d}\tr (a^{\dagger} b),
\end{equation}
where $a^{\dagger}$ is the conjugate transpose of $a$. The inner product with self is the operator norm:
\begin{equation}
    \lVert a \rVert = \sqrt{\innerprod{a}{a}} \geq 0,
\end{equation}
where the equality holds if and only if $a$ is the null matrix.

Elements of an $L$-tuple $A \in S^L$ for any set $S$ are addressed by $A[l] \; (l=0,\dots,L-1)$. The number of elements in a tuple is given by $|A| = L$. Concatenation of tuples $A$ and $B \in S^{L'}$ is
\begin{equation}
A + B = (A[0], \dots, A[L-1], B[0], \dots, B[L'-1]).
\end{equation}
A tuple may be organized as a column vector, in which case the notation $\vec{A} = \left(A[0], \dots, A[L-1]\right)^T$ is used. The dot product $S^L \times T^L \to R$ between vectors $\vec{A}$ and $\vec{C} \in T^L$ represents the sum over element-wise multiplications
\begin{equation}
    \vec{A} \cdot \vec{C} = \sum_{l=0}^{L-1} A[l] C[l]
\end{equation}
when the product $S \times T \to R$ is defined and $R$ is a linear space.

\section{Main results}
\label{sec:main}


The standard algorithm for constructing the DLA \closure of generators \generators is given in \cref{alg:dla_standard}.

\begin{algorithm}
\caption{Standard algorithm for DLA construction. Reformulation of Algorithm 1 in \cite{Larocca2022-bh}.}
\label{alg:dla_standard}
\begin{algorithmic}
\Require A tuple of operators $\generators \subset \algebra$
\Ensure Basis $B$ of closure \closure of \generators
\State $B \gets ()$, 0-tuple
\ForAll{$g \in \generators$}
    \If{$B + (g)$ is linearly independent}
        \State $B \gets B + (g)$
    \EndIf
\EndFor
\State $l \gets 1,\; r \gets 0$
\While{$l < |B|$}
    \For{$m = 0, \dots, r$}
        \State $h \gets \comm{B[l]}{B[m]}$
        \If{$B + (h)$ is linearly independent}
            \State $B \gets B + (h)$
        \EndIf
    \EndFor
    \State $r \gets r + 1$
    \If{$r = l$}
        \State $l \gets l + 1, \; r \gets 0$
    \EndIf
\EndWhile
\State \Return B
\end{algorithmic}
\end{algorithm}

We first remark on the memory consumption of DLA construction, where at least $|B|$ operators must be kept in memory at the end. Recalling that the operators are $d \times d$ matrices, and for $\algebra = \sud$ on a controllable system $|B| = d^2-1$, a simple matrix representation of the operators can require $d^4-d^2$ complex numbers to be stored, which can be prohibitive for $d$ that is some exponential of the physical system size.

However, it is often the case that the elements of \generators are highly sparse matrices, because for example each element represents an operation on a small subsystem, such as a pair of neighboring qubits. In such cases, sparse representations such as a sum of Pauli strings (matrix decomposition using Kronecker products of $2 \times 2$ Pauli matrices as the basis) can be employed, reducing the memory footprint of individual operators typically to $\poly(\log d)$. In fact, it is virtually always necessary to rely on a sparse representation for any nontrivial system sizes.

Looking into the runtime, \cref{alg:dla_standard} requires $M = |B| (|B|-1) / 2$ commutator calculations and $M + |\generators| - 1$ linear independence checks. 
The commutators would be evaluated according to the numerical representation of the operators. For the direct matrix representation, they would be nothing more than a difference between two matrix products. For sparse representations such as Pauli sums, algebraic relations may be exploited. In any case, there is not so much to optimize in this part of the algorithm.

On the other hand, there are multiple ways to check whether an operator is linearly independent from a set of operators, and some scale worse than others in terms of both runtime and memory consumption. For example, a typically employed method is to calculate the rank of the matrix $(B[0] \; B[1] \; \cdots \; B[N'-1] \; h)$, where the operators are expressed as column vectors by concatenating all columns of the matrix. This method is however suboptimal, because the rank of a large ($d^2 \times N'$) matrix must be calculated. The complexity of e.g. singular value decomposition (SVD) to reveal the rank is $\mathcal{O}(d^2 (N')^2)$. Furthermore, this method is incompatible with representing the operators sparsely, unless a special algorithm for calculating the rank of a set of Pauli sums, for example, exists. Without such an algorithm, the operators must be expanded into full matrices first, nullifying the advantage in memory footprint.

There are examples of alternative approaches to linear independence check. Ref. \cite{Wiersema2024-om} analyzed the DLA generated from single Pauli strings. The authors then exploited the fact that the commutator of two Pauli strings is also a Pauli string. Since distinct strings are always linearly independent, linear independence check in this case is reduced to a test of whether the new Pauli string is contained in the basis $B$. The implementation in the PennyLane library~\cite{Bergholm2020-fc} extends this idea to general sums of Pauli strings by first checking whether the new operator contains a new Pauli string as a term. When affirmative, the operator is added to the basis without further computation. Otherwise, the implementation falls back to the aforementioned rank calculation method.

Here, we propose two methods of linear independence check that is more efficient than rank calculation and also do not rely on the specifics of the operator representation. The first, called the matrix inversion method, is more memory-efficient than the second method, but may be more prone to numerical instabilities. The second method, called the orthonormalization method, is fast and numerically more stable.

\subsection{Matrix inversion method}

The basic insight of the matrix inversion method comes from the following lemma and theorem.

\begin{lemma}\label{lma:ip_matrix_invertible}
    A matrix $A$ of inner products between non-null, linearly independent operators $g_0, \dots, g_{N-1}$ where
    \begin{equation}
        A_{ij} = \innerprod{g_i}{g_j}
    \end{equation}
    is invertible.
\end{lemma}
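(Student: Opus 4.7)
The plan is the standard Gram matrix argument: show that $A$ is Hermitian and positive definite, hence nonsingular. First I would observe from the definition $\innerprod{a}{b}=\frac{1}{d}\tr(a^{\dagger}b)$ that $\innerprod{g_j}{g_i}=\overline{\innerprod{g_i}{g_j}}$, so that $A_{ji}=\overline{A_{ij}}$ and $A$ is Hermitian. This also records that the inner product is conjugate-linear in the first argument and linear in the second, a point I will use in the next step.

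Next I would fix an arbitrary $\vec{c}=(c_0,\dots,c_{N-1})^T\in\mathbb{C}^N$ and compute the quadratic form
\begin{equation}
    \vec{c}^{\,\dagger} A\,\vec{c} \;=\; \sum_{i,j=0}^{N-1} \overline{c_i}\,\innerprod{g_i}{g_j}\,c_j \;=\; \Big\langle \sum_{i} c_i g_i,\; \sum_{j} c_j g_j \Big\rangle \;=\; \Big\lVert \sum_{i} c_i g_i \Big\rVert^{2},
\end{equation}
where the middle equality is sesquilinearity and the last is the definition of the norm given in the excerpt. This expression is manifestly nonnegative, so $A$ is positive semidefinite.

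To upgrade this to positive definiteness, I would suppose $\vec{c}^{\,\dagger} A\,\vec{c}=0$. Then $\lVert\sum_i c_i g_i\rVert=0$, and by the stated property that the norm vanishes only on the null matrix, $\sum_i c_i g_i$ is the zero operator. Linear independence of $g_0,\dots,g_{N-1}$ then forces $c_i=0$ for all $i$, so the only vector in the kernel of the quadratic form is $\vec{c}=\vec{0}$. Hence $A$ is strictly positive definite; in particular, all its eigenvalues are strictly positive, and it is invertible.

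The argument is essentially textbook, so I do not expect a real obstacle. The only subtle points worth flagging in the write-up are (i) the sesquilinear convention, to make sure the conjugates land on the correct side when passing from $\vec{c}^{\,\dagger} A\,\vec{c}$ to $\innerprod{\sum c_i g_i}{\sum c_j g_j}$, and (ii) explicit appeal to the norm-nondegeneracy statement from the notation section, since it is what lets us conclude $\sum_i c_i g_i=0$ rather than merely that this operator has zero norm. The hypothesis that the $g_i$ are non-null is in fact redundant (it follows from linear independence), but it may be worth noting this in passing.
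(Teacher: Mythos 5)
Your proof is correct and follows essentially the same route as the paper: both hinge on the identity $\sum_{k,l}\overline{c_k}\,A_{kl}\,c_l = \lVert\sum_k c_k g_k\rVert^2$ together with linear independence to conclude positive definiteness; the paper merely packages this by unitarily diagonalizing the Hermitian matrix $A$ and showing each eigenvalue equals such a squared norm, whereas you evaluate the quadratic form on an arbitrary vector directly. Your closing remark that the non-null hypothesis is redundant given linear independence is also accurate.
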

\begin{proof}
    By the definition of the inner product $\innerprod{\cdot}{\cdot}$, $A$ is Hermitian:
    \begin{equation}
        (A_{ij})^{*} = \innerprod{g_i}{g_j}^{*} = \innerprod{g_j}{g_i} = A_{ji}.
    \end{equation}
    Therefore there exists a unitary matrix $U$ that diagonalizes $A$:
    \begin{equation}
        U^{\dagger} A U = \diag(\lambda_0, \dots, \lambda_{N-1}).
    \end{equation}
    The eigenvalues $\{\lambda_j\}_{j=0}^{N-1}$ of $A$ are then
    \begin{equation}
    \begin{split}
        \lambda_j & = \sum_{k,l} (U_{kj})^{*} A_{kl} U_{lj} \\
        & = \innerprod{\sum_{k} U_{kj} g_k}{\sum_{l} U_{lj} g_l} \\
        & = \left\lVert \sum_{k} U_{kj} g_k \right\rVert^2.
    \end{split}
    \end{equation}
    Because the operators $g_k$ are all non-null and linearly independent, their linear composition cannot result in a null operator:
    \begin{equation}
    \left\lVert \sum_{k} U_{kj} g_k \right\rVert > 0 \;\; \forall j.
    \end{equation}
    Thus, $A$ is a full-rank Hermitian matrix, which is invertible.
\end{proof}

\begin{theorem}[Efficient determination of linear independence]\label{thm:matrix_inversion}
    Linear independence of an operator $h$ with respect to a linearly independent tuple of operators $B$ can be determined if four following procedures can be performed on the operators:
    \begin{itemize}
        \item Scalar multiplication.
        \item Addition.
        \item Inner product.
        \item Comparison with the null operator.
    \end{itemize}
\end{theorem}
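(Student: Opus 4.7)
The plan is to prove the theorem constructively by exhibiting an algorithm that uses only the four listed procedures. Write $B = (g_0, \ldots, g_{N-1})$. First I would form the Gram matrix $A \in \mathbb{C}^{N \times N}$ with entries $A_{ij} = \innerprod{g_i}{g_j}$, using only the inner-product procedure on the operators. Since $B$ is linearly independent, none of the $g_i$ can be the null operator, so \cref{lma:ip_matrix_invertible} applies and $A$ is invertible. Crucially, the inversion of $A$ is a purely scalar computation on complex numbers and requires no operations on the operators themselves.

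Next I would compute the vector $v \in \mathbb{C}^N$ with $v_i = \innerprod{g_i}{h}$, again using only inner products, and obtain the coefficient vector $c = A^{-1} v$. Using scalar multiplication and addition on the operators, I would then form the residual
\begin{equation}
r = h - \sum_{i=0}^{N-1} c_i g_i,
\end{equation}
and apply the comparison-with-the-null-operator procedure to decide whether $r = 0$. The algorithm would declare $h$ linearly independent from $B$ precisely when $r \neq 0$.

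The key claim to verify is that $r = 0$ if and only if $h$ lies in $\vspan(B)$. One direction is immediate: if $r = 0$, then $h$ is manifestly a linear combination of the $g_i$. For the converse, assume $h = \sum_i a_i g_i$ for some scalars $a_i$. Conjugate-linearity in the first slot and linearity in the second of $\innerprod{\cdot}{\cdot}$ give $v_j = \innerprod{g_j}{h} = \sum_i A_{ji} a_i$, i.e.\ $v = A a$, so $c = A^{-1} v = a$ and therefore $r = h - \sum_i a_i g_i = 0$. I do not anticipate a significant obstacle; the heavy lifting is done by \cref{lma:ip_matrix_invertible}. The one conceptual point worth emphasising is that one obtains the orthogonal projection of $h$ onto $\vspan(B)$ without orthogonalising $B$ at all, by solving a small scalar linear system assembled from inner products. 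This is precisely what decouples the operator-level work (restricted to the four listed procedures) from the scalar-level linear algebra (which is unconstrained), and is what will make the resulting algorithm compatible with sparse representations later.
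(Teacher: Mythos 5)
Your proof is correct and takes essentially the same route as the paper's: form the Gram matrix $A$, invoke \cref{lma:ip_matrix_invertible} for its invertibility, solve the scalar system $A c = v$ for the candidate coefficients, and test the residual $h - \sum_i c_i g_i$ against the null operator. You are in fact slightly more complete than the paper, which asserts the ``if and only if'' for the residual vanishing while only spelling out one direction; your explicit check of both directions closes that small gap.
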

\begin{proof}\label{prf:matrix_inversion}
    Assume that $h$ is linearly dependent on $B$. Let the size of $B$ be $|B| = N'$. Then there exists a vector of coefficients $\vec{x} \in \mathbb{C}^{N'}$ that satisfies the following equation:
    \begin{equation}
        \vec{B} \cdot \vec{x} = h.
    \end{equation}
    Taking the inner product with $B[l]$ on both sides, we have
    \begin{equation}
        \sum_{m=0}^{N'-1} \innerprod{B[l]}{B[m]} x_m = \innerprod{B[l]}{h}.
    \end{equation}
    Because $B$ is a tuple of linearly independent operators, the matrix $A_{lm} := \innerprod{B[l]}{B[m]}$ is invertible by \cref{lma:ip_matrix_invertible}. Stacking this equation vertically for $l=0,\dots,N'-1$ and denoting $\vec{\beta} = \left(\innerprod{B[0]}{h}, \dots, \innerprod{B[N'-1]}{h}\right)^T$, we find that
    \begin{equation}
        \vec{x} = A^{-1} \vec{\beta},
    \end{equation}
    and therefore
    \begin{equation}\label{eqn:orthogonal_component}
        h - \vec{B} \cdot \left(A^{-1} \vec{\beta}\right) = 0.
    \end{equation}
    Since the equality holds iff $h$ is linearly dependent on $B$, a comparison of the left hand side against the null operator serves as a check of linear independence.
\end{proof}

The matrix inversion method utilizes the linear independence check method outlined in the proof. The DLA construction algorithm using this method is a slight modification of \cref{alg:dla_standard}, given in \cref{alg:dla_matrix_inversion}.

\begin{algorithm}
\caption{Algorithm for DLA construction using the matrix inversion method for linear independence check.}
\label{alg:dla_matrix_inversion}
\begin{algorithmic}
\Require A tuple of operators $\generators \subset \algebra$
\Ensure Basis $B$ of closure \closure of \generators
\State $B \gets ()$, 0-tuple
\State $A \gets ()$, $0 \times 0$ matrix
\ForAll{$g \in \generators$}
    \State $\bar{g} \gets g / \lVert g \rVert$
    \State $\beta \gets (\innerprod{B[0]}{\bar{g}}, \dots, \innerprod{B[|B|-1]}{\bar{g}})$
    \If{$\bar{g} - \vec{B} \cdot \left(A^{-1} \vec{\beta}\right) \neq 0$}
        \State $B \gets B + (\bar{g})$
        \State Expand $A$:
        \State $\begin{aligned} \quad & A_{i,|B|-1} = A_{|B|-1,i}^* = \beta[i]\; \text{for}\; i=0,\dots,|B|-2 \\
                             \quad & A_{|B|-1,|B|-1}=1. \end{aligned}$
        \State Compute $A^{-1}$ and store in memory
    \EndIf
\EndFor
\State $l \gets 1,\; r \gets 0$
\While{$l < |B|$}
    \For{$m = 0, \dots, r$}
        \State $h \gets \comm{B[l]}{B[m]}/\lVert \comm{B[l]}{B[m]} \rVert$
        \State $\beta \gets (\innerprod{B[0]}{h}, \dots, \innerprod{B[|B|-1]}{h})$
        \If{$h - \vec{B} \cdot \left(A^{-1} \vec{\beta}\right) \neq 0$}
            \State $B \gets B + (h)$
            \State Expand $A$:
            \State $\begin{aligned} \quad & A_{i,|B|-1} = A_{|B|-1,i}^* = \beta[i]\; \text{for}\; i=0,\dots,|B|-2 \\
                             \quad & A_{|B|-1,|B|-1}=1. \end{aligned}$
            \State Compute $A^{-1}$ and store in memory
        \EndIf
    \EndFor
    \State $r \gets r + 1$
    \If{$r = l$}
        \State $l \gets l + 1, \; r \gets 0$
    \EndIf
\EndWhile
\State \Return B
\end{algorithmic}
\end{algorithm}

For sparsely represented operators, \cref{alg:dla_matrix_inversion} does not require expansions into dense matrices, since the procedures listed in \cref{thm:matrix_inversion} can usually be performed directly on objects in such representations. Additionally, the expensive inversion of $A$ is performed only when a new element is appended to the basis, which is in contrast to the requirement in the rank calculation method to run the SVD or some other matrix decomposition subroutine for every invocation of linear independence check. In terms of memory footprint, the size of matrix $A$ is $|B| \times |B|$ at the end of the algorithm, which is comparable to $d^2 \times |B|$ of the rank calculation method, if $\algebra = \sud$ and the system is controllable. For non-controllable systems,
$|B| \times |B|$ can be significantly smaller than $d^2 \times |B|$.

All operators are unit-normalized in \cref{alg:dla_matrix_inversion}. While this is mathematically an unnecessary procedure, it was observed to be critically important for the numerical stability of the method. Without such normalization, inner products of nested commutators can in principle grow exponentially throughout the algorithm iteration, making $A$ extremely ill-conditioned.

Even with the normalization, however, $A$ can still be ill-conditioned when $|B|$ is large, leading to observable errors for certain cases in our numerical experiments. Matrix inversion is also a computationally expensive procedure, even if it is performed for a limited number of times. The orthonormalization method described in the following is numerically more robust and relies on less expensive operations, at the price of potentially increased memory footprint.

\subsection{Orthonormalization method}

The orthonormalization method for linear independence check is based on the following corollary of \cref{thm:matrix_inversion}.

\begin{corollary}
When $B$ is a tuple of orthonormal operators, the only inner products required to check the linear independence of $h$ are between $h$ and the elements of $B$.
\end{corollary}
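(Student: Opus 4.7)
The plan is to specialize Theorem~\ref{thm:matrix_inversion} to the orthonormal case. From that theorem, checking whether $h$ is linearly independent of $B$ reduces to evaluating the residual $h - \vec{B}\cdot\bigl(A^{-1}\vec{\beta}\bigr)$ and comparing it with the null operator, where $A_{lm} = \innerprod{B[l]}{B[m]}$ is the Gram matrix of $B$ and $\beta_l = \innerprod{B[l]}{h}$.

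First I would invoke orthonormality of $B$ to conclude that $A_{lm} = \delta_{lm}$, so $A$ is the identity matrix and hence $A^{-1}$ is also the identity. Consequently the residual simplifies to $h - \vec{B}\cdot\vec{\beta}$, with no matrix inversion and no Gram-matrix construction required.

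Next I would observe that forming this residual only calls for the entries of $\vec{\beta}$, each of which is an inner product between $h$ and a single element of $B$; no inner products among elements of $B$ themselves enter the calculation. Comparing the resulting residual with the null operator then completes the linear independence check, which is exactly what the corollary asserts.

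The argument is essentially a one-line specialization of Theorem~\ref{thm:matrix_inversion}, relying solely on the fact that the Gram matrix of an orthonormal tuple is the identity, so there is no substantive obstacle. The only point worth stating explicitly is that the simplification $A = I$ simultaneously eliminates the inversion step and the need to compute any inner products internal to $B$, leaving only the $|B|$ inner products of the form $\innerprod{B[l]}{h}$.
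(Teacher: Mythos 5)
Your proposal is correct and follows exactly the paper's own argument: specialize the proof of Theorem~\ref{thm:matrix_inversion} by noting that orthonormality makes the Gram matrix $A$ the identity, so the residual reduces to $h - \vec{B}\cdot\vec{\beta}$ and the only inner products needed are the $\innerprod{B[l]}{h}$. No differences worth noting.
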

\begin{proof}
    This follows trivially from the proof of \cref{thm:matrix_inversion}, using $\innerprod{B[i]}{B[j]} = \delta_{ij}$ in addition. Because $A$ in this case is a $|B| \times |B|$ identity matrix $I_{|B|}$, $\vec{x} = \vec{\beta}$ is the only object requiring inner product computations.
\end{proof}

For an orthonormal $B$, the dot product
\begin{equation}
    \begin{split}
    \vec{B} \cdot (A^{-1} \vec{\beta}) & = \vec{B} \cdot \vec{\beta} \\
    & = \sum_{l=0}^{|B|-1} \innerprod{B[l]}{h} B[l]
    \end{split}
\end{equation}
in \cref{eqn:orthogonal_component} is in fact the projection of $h$ onto $\vspan(B)$, denoted as $\proj_B h$. The meaning of the linear independence condition $h_{\perp} := h - \proj_B h \neq 0$ becomes clear in this case as probing the existence of an orthogonal component of $h$ with respect to $\vspan(B)$.

The DLA construction through the orthonormalization method, given in \cref{alg:dla_orthonormalization}, works by iteratively expanding two bases of the DLA. One basis will be the output that contains the linearly independent elements of \generators and the nested commutators thereof, and the other is the auxiliary orthonormalized basis used for linear independence check. Therefore, this method requires holding twice as many operators in memory than the standard algorithm, which can be a substantial liability if the operators are represented by full matrices. On the other hand, no matrix decomposition or inversion is necessary in this method, and the complexity of linear independence check is in fact $\mathcal{O}(|B|)$.

\begin{algorithm}
\caption{Modified algorithm for DLA construction that uses the orthonormalization method for linear independence check. In the algorithm, $\proj_V a$ corresponds to the projection of operator $a$ onto the span of $V$ (equivalently span of $B$), which is easily calculable since $V$ constitutes an orthonormal basis of the linear space.}
\label{alg:dla_orthonormalization}
\begin{algorithmic}
\Require A set of operators $\generators \subset \algebra$
\Ensure Basis $B$ of closure \closure of \generators
\State $B \gets ()$, 0-tuple
\State $V \gets ()$, 0-tuple
\ForAll{$g \in \generators$}
    \State $g_{\perp} \gets g - \proj_V g$
    \If{$g_{\perp} \neq 0$}
        \State $B \gets B + (g)$
        \State $V \gets V + (g_{\perp}/\lVert g_{\perp} \rVert)$
    \EndIf
\EndFor
\State $l \gets 1,\; r \gets 0$
\While{$l < |B|$}
    \For{$m = 0, \dots, r$}
        \State $h \gets \comm{B[l]}{B[m]}$
        \State $h_{\perp} \gets h - \proj_V h$
        \If{$h_{\perp} \neq 0$}
            \State $B \gets B + (h)$
            \State $V \gets V + (h_{\perp}/\lVert h_{\perp} \rVert)$
        \EndIf
    \EndFor
    \State $r \gets r + 1$
    \If{$r = l$}
        \State $l \gets l + 1, \; r \gets 0$
    \EndIf
\EndWhile
\State \Return B
\end{algorithmic}
\end{algorithm}

Memory consumption can actually be reduced if one does not require the elements of \generators and their nested commutators to be parts of the basis. Under this scenario, $B$ can be discarded altogether, and the modified algorithm would be given as \cref{alg:dla_orthonormalization_dimonly}. One nontrivial difference between \cref{alg:dla_orthonormalization} and \cref{alg:dla_orthonormalization_dimonly} is whether commutators of elements of $B$ or $V$ are calculated. That the two result in the same DLA can be proven easily.

\begin{algorithm}
\caption{Simplified DLA construction algorithm that does not preserve the original generator elements in the output basis.}
\label{alg:dla_orthonormalization_dimonly}
\begin{algorithmic}
\Require A set of operators $\generators \subset \algebra$
\Ensure Orthonormal basis $V$ of closure \closure of \generators
\State $V \gets ()$, 0-tuple
\ForAll{$g \in \generators$}
    \State $g_{\perp} \gets g - \proj_V g$
    \If{$g_{\perp} \neq 0$}
        \State $V \gets V + (g_{\perp}/\lVert g_{\perp} \rVert)$
    \EndIf
\EndFor
\State $l \gets 1,\; r \gets 0$
\While{$l < |V|$}
    \For{$m = 0, \dots, r$}
        \State $h \gets \comm{V[l]}{V[m]}$
        \State $h_{\perp} \gets h - \proj_V h$
        \If{$h_{\perp} \neq 0$}
            \State $V \gets V + (h_{\perp}/\lVert h_{\perp} \rVert)$
        \EndIf
    \EndFor
    \State $r \gets r + 1$
    \If{$r = l$}
        \State $l \gets l + 1, \; r \gets 0$
    \EndIf
\EndWhile
\State \Return V
\end{algorithmic}
\end{algorithm}


\section{Numerical experiments}
\label{sec:validation}

\subsection{Validation with known results}

We checked the validity of the orthonormalization method and our implementation of it by computing the dimensions of the DLAs of quantum circuits introduced in Ref. \cite{Larocca2022-bh}. The DLA dimensions of these circuits are either analytically derived or have been calculated by the authors using the standard algorithm. We also compare the runtime of the orthonormalization method with a reference implementation of \cref{alg:dla_standard} using the same code base and hardware, but checking for linear independence by calculating the matrix rank via SVD. Direct matrix representation of the operators is employed in the experiments. Matrix operations including SVD are performed on a graphic processing unit (GPU).

\Cref{tab:validation} lists the names of the circuits, expected DLA dimensions, and calculation results. We find that the calculated DLA dimension matches the expectation in all instances except for $n=8$ and $10$ of (d), for which the discrepancies may come from rounding-off errors during computation. We also observe that the runtime for the orthonormalization and rank calculation methods scale significantly differently.

\begin{table}
\caption{Validation of the orthonormalization method using quantum circuits in Ref. \cite{Larocca2022-bh}. Section numbers indicate where the generator definitions can be found. DLA dimensions are calculated by our implementation of \cref{alg:dla_orthonormalization_dimonly}. Runtime is in seconds. Hyphens indicate that the experiment is not attempted because the predicted runtime is too long.}
\label{tab:validation}

\begin{flushleft}
(a) Hardware-efficient ansatz; Section 4.1. DLA dimension for an $n$-qubit circuit is $4^n - 1$ (controllable system).
\end{flushleft}
\begin{center}
\begin{tabular}{l|ccccc}
Number of qubits & 2 & 3 & 4 & 5 & 6 \\
\hline
DLA dimension & 15 & 63 & 255 & 1023 & 4095 \\
Runtime (orthonormalization) & 0.5 & 1 & 2 & 11 & 330 \\
Runtime (rank calculation) & 0.6 & 3 & 251 & - & -
\end{tabular}
\end{center}
\vspace{0.5em}
\begin{flushleft}
(b) Spin glass Hamiltonian variational ansatz (HVA); Section 4.1. DLA dimension for an $n$-qubit circuit is $4^n - 1$ (controllable system).
\end{flushleft}
\begin{center}
\begin{tabular}{l|ccccc}
Number of qubits & 3 & 4 & 5 & 6 \\
\hline
DLA dimension & 63 & 255 & 1023 & 4095 \\
Runtime (orthonormalization) & 0.5 & 0.6 & 1 & 8 \\
Runtime (rank calculation) & 1 & 14 & 571 & -
\end{tabular}
\end{center}
\vspace{0.5em}
\begin{flushleft}
(c) Heisenberg XXZ model HVA without the control generator; Section 4.2. Results for zero-magnetization subspace. DLA dimension for an $n$-qubit circuit is $d_{n/2}^2 - 1$, where $d_{n/2}$ is the dimension of the Hilbert subspace.
\end{flushleft}
\begin{center}
\begin{tabular}{l|ccccc}
Number of qubits & 4 & 6 & 8 & 10 \\
\hline
$d_{n/2}$ & 4 & 10 & 38 & 126 \\
DLA dimension & 15 & 99 & 1443 & 15875 \\
Runtime (orthonormalization) & 0.7 & 2 & 5 & 1186 \\
Runtime (rank calculation) & 0.5 & 3 & 505 & -
\end{tabular}
\end{center}
\vspace{0.5em}
\begin{flushleft}
(d) Transverse field Ising model HVA with the open boundary condition; Section 5.2.2. DLA dimension for an $n$-qubit circuit is $n^2 - 1$.
\end{flushleft}
\begin{center}
\begin{tabular}{l|ccccc}
Number of qubits & 4 & 6 & 8 & 10 \\
\hline
DLA dimension & 15 & 35 & 64 & 100 \\
Runtime (orthonormalization) & 1.4 & 1.8 & 3.2 & 35 \\
Runtime (rank calculation) & 0.5 & 3 & 505 & 1390
\end{tabular}
\end{center}
\end{table}

\subsection{DLA calculation for larger systems}

To showcase the advantage of retaining sparse representations of the operators, we extend the calculation in \cref{tab:validation} (a) to larger qubit numbers, using sparse Pauli string sums as the underlying representation of the operators. Since GPUs generally struggle with sparse representations, calculations here are performed on a many-core CPU with parallelized commutator evaluations. The DLA dimension and runtime of the algorithm are given in \cref{tab:hea}, together with the maximum size of the basis matrix $d^2 \times |B|$ that would be required for the rank calculation. 

\begin{table}
\caption{Calculation of the DLA dimension of the hardware-efficient ansatz in Ref. \cite{Larocca2022-bh} using a sparse Pauli representation of the operators. Runtime is in seconds. The orthonormalization method proposed in this paper preserves sparse representations of matrices, allowing direct calculation of DLAs up to large Hilbert space dimensions.}
\label{tab:hea}
\begin{center}
\begin{tabular}{l|cccc}
Number of qubits & 6 & 7 \\ 
\hline
DLA dimension & 4095 & 16383 \\ 
Basis matrix size & $1.67 \times 10^7$ & $2.68 \times 10^8$ \\ 
Runtime & 429 & 2345 \\ 
\end{tabular}
\end{center}
\end{table}

\section{Conclusion}
\label{sec:conclusion}

In this paper, we proposed two algorithms for numerically identifying closures of Lie algebra elements. Both algorithms have the same basic flow as the standard algorithm, but employ efficient methods to check the linear independence of a matrix with respect to the growing list of basis matrices. Compared to the standard algorithm utilizing a calculation of the rank of a large matrix, the proposed algorithms have lower complexity and consumes less computer memory, allowing direct calculation of Lie closures for larger matrix sizes than was previously possible.

With the main motivation of the development being in the studies of parametric quantum circuits, for which the Lie closure is referred to as the dynamical Lie algebra (DLA), we demonstrated one of the proposed algorithms on problems of counting the dimensions of the DLAs of various quantum circuit structures. Our implementation of the algorithm is shown to reproduce the expected results for most of the experiments, with a clear scaling difference in runtime compared to the reference implementation based on matrix rank calculation.

\section*{Acknowledgments}
The author is grateful to Lento Nagano for introducing the subject and seeding the investigations that led to the presented results. This work was partially performed within the research collaboration framework between the International Center for Elementary Particles at the University of Tokyo and Toppan, inc. set up under the Quantum Innovation Initiative.

\bibliographystyle{siamplain}
\bibliography{main}

\begin{thebibliography}{10}

\bibitem{Aguilar2024-rb}
{\sc G.~Aguilar, S.~Cichy, J.~Eisert, and L.~Bittel}, {\em Full classification of pauli lie algebras}, arXiv:2408.00081,  (2024).

\bibitem{Bergholm2020-fc}
{\sc V.~Bergholm, J.~Izaac, M.~Schuld, C.~Gogolin, M.~S. Alam, S.~Ahmed, J.~M. Arrazola, C.~Blank, A.~Delgado, S.~Jahangiri, K.~McKiernan, J.~J. Meyer, Z.~Niu, A.~Száva, and N.~Killoran}, {\em {PennyLane}: Automatic differentiation of hybrid quantum-classical computations}, arXiv:1811.04968,  (2020).

\bibitem{Fontana2024-le}
{\sc E.~Fontana, D.~Herman, S.~Chakrabarti, N.~Kumar, R.~Yalovetzky, J.~Heredge, S.~H. Sureshbabu, and M.~Pistoia}, {\em Characterizing barren plateaus in quantum ansätze with the adjoint representation}, Nat. Commun., 15 (2024), p.~7171.

\bibitem{Goh2023-ya}
{\sc M.~L. Goh, M.~Larocca, L.~Cincio, M.~Cerezo, and F.~Sauvage}, {\em Lie-algebraic classical simulations for quantum computing}, arXiv:2308.01432,  (2023).

\bibitem{Kokcu2024-vn}
{\sc E.~Kökcü, R.~Wiersema, A.~F. Kemper, and B.~N. Bakalov}, {\em Classification of dynamical lie algebras generated by spin interactions on undirected graphs}, arXiv:2409.19797,  (2024).

\bibitem{Larocca2022-bh}
{\sc M.~Larocca, P.~Czarnik, K.~Sharma, G.~Muraleedharan, P.~J. Coles, and M.~Cerezo}, {\em Diagnosing barren plateaus with tools from quantum optimal control}, Quantum, 6 (2022), p.~824.

\bibitem{Larocca2023-qy}
{\sc M.~Larocca, N.~Ju, D.~García-Martín, P.~J. Coles, and M.~Cerezo}, {\em Theory of overparametrization in quantum neural networks}, Nat. Comput. Sci., 3 (2023), pp.~542--551.

\bibitem{Monbroussou2025-he}
{\sc L.~Monbroussou, E.~Z. Mamon, J.~Landman, A.~B. Grilo, R.~Kukla, and E.~Kashefi}, {\em Trainability and expressivity of hamming-weight preserving quantum circuits for machine learning}, Quantum, 9 (2025), p.~1745.

\bibitem{Oszmaniec2017-cf}
{\sc M.~Oszmaniec and Z.~Zimborás}, {\em Universal extensions of restricted classes of quantum operations}, Phys. Rev. Lett., 119 (2017), p.~220502.

\bibitem{Pozzoli2022-oz}
{\sc E.~Pozzoli, M.~Leibscher, M.~Sigalotti, U.~Boscain, and C.~Koch}, {\em Lie algebra for rotational subsystems of a driven asymmetric top}, J. Phys. A Math. Theor., 55 (2022), p.~215301.

\bibitem{Schirmer2001-fj}
{\sc S.~G. Schirmer, H.~Fu, and A.~I. Solomon}, {\em Complete controllability of quantum systems}, Phys. Rev. A, 63 (2001), p.~063410.

\bibitem{Wiersema2024-om}
{\sc R.~Wiersema, E.~Kökcü, A.~F. Kemper, and B.~N. Bakalov}, {\em Classification of dynamical lie algebras of 2-local spin systems on linear, circular and fully connected topologies}, Npj Quantum Inf., 10 (2024), p.~110.

\bibitem{Zeier2010-me}
{\sc R.~Zeier and T.~Schulte-Herbrueggen}, {\em Symmetry principles in quantum systems theory}, arXiv:1012.5256,  (2010).

\end{thebibliography}
\end{document}